
\documentclass[letterpaper, 10 pt, conference]{ieeeconf}  

\usepackage{srcltx}
\IEEEoverridecommandlockouts                              

\overrideIEEEmargins                                      
\usepackage{bm}
 \usepackage{amssymb,latexsym,amsfonts,amsmath,amscd} 
\usepackage{caption}
\usepackage{subcaption}
\usepackage{cite}    
\usepackage{float}
\usepackage{verbatim}



\usepackage{cite}
\usepackage{acronym}
 
\usepackage{graphicx}
\usepackage{paralist}
\newtheorem{definition}{Definition}
\newtheorem{theorem}{Theorem}
\newtheorem{lemma}{Lemma}

\newtheorem{example}{Example}

\newtheorem{problem}{Problem}

\acrodef{sa}[SA]{semiautomaton}
\acrodef{fsa}[FSA]{finite state automaton}
\acrodef{dfa}[DFA]{deterministic finite state automaton}
\acrodef{pta}[PTA]{prefix tree acceptor}
\acrodef{smt}[SMT]{Satisfiability Modulo Theories}
\acrodef{ltl}[LTL]{Linear temporal logic}
\acrodef{ts}[TS]{transition system}

\newcommand{\truev}{\mathsf{true}}
\newcommand{\falsev}{\mathsf{false}}

\acrodef{pomdp}[POMDP]{partially observable Markov decision process}

\acrodef{act}[\textsc{ACT}]{abstract counterexample tree}
\acrodef{cegar}[CEGAR]{counterexample guided abstraction refinement}
\providecommand{\abs}[1]{\lvert#1\rvert}

\title{Abstractions and sensor design in partial-information, reactive
controller synthesis}

\author{Jie Fu, Rayna Dimitrova and Ufuk Topcu
\thanks{This work was supported in part by the AFOSR (FA9550-12-1-0302) and ONR (N00014-13-1-0778).}
\thanks{The authors are with the University of Delaware (USA), Max
  Planck Institute for Software Systems (Germany), and the University
  of Pennsylvania (USA), respectively.}}

\begin{document}

\maketitle
\thispagestyle{empty}
\pagestyle{empty}

\begin{abstract}
  Automated synthesis of reactive control protocols from temporal
  logic specifications has recently attracted considerable attention
  in various applications in, for example, robotic motion planning,
  network management, and hardware design. An implicit and often
  unrealistic assumption in this past work is the availability of
  complete and precise sensing information during the execution of the
  controllers. In this paper, we use an abstraction procedure for
  systems with partial observation and propose a formalism to
  investigate effects of limitations in sensing. The abstraction
  procedure enables the existing synthesis methods with partial
  observation to be applicable and efficient for systems with infinite
  (or finite but large number of) states. This formalism enables us to
  systematically discover sensing modalities necessary in order to
  render the underlying synthesis problems feasible. 
  We use counterexamples, which witness unrealizability potentially
  due to the limitations in sensing and the coarseness in the abstract
  system, and interpolation-based techniques to refine the model and
  the sensing modalities, i.e., to identify new sensors to be
  included, in such synthesis problems. We demonstrate the method on
  examples from robotic motion planning.
\end{abstract}

\section{Introduction}
\label{sec:intro}

Automatically synthesizing reactive controllers with proofs of
correctness for given temporal logic specifications has
emerged as a methodology complementing post-design verification
efforts in building assurance in system operation. Its recent
applications include autonomous robots \cite{ram-hadas,itac}, hardware
design \cite{GR(1)case}, and vehicle management systems
\cite{vms0}. This increasing interest is partly due to both
theoretical advances \cite{pnueli-hardmis, piterman} and software
toolset developments
\cite{pnueli2010jtlv,bloem2010ratsy,wongpiromsarn2011tulip}.

An implicit and often unrealistic assumption in the past work on
reactive synthesis is the availability of complete and precise 
information during the execution of controllers. For example,
while navigating through a workspace, a robot rarely (if ever) has
global awareness about its surrounding dynamic environment and its
sensing of even its own configuration is imprecise. This paper takes
an initial step toward explicitly accounting for the effects of such
incompleteness and imperfectness in sensing (and other means through
which information is revealed to the controller at runtime). 

More specifically, we use an abstraction procedure for games with
partial observation \cite{rayna-report} and propose a formalism to
investigate the effects of limitations in sensing. The abstraction
reduces the size of the control synthesis problem with sensing
limitations by focusing on relevant properties of the control
objective and enables automatic synthesis for systems with potentially
large state spaces using the solutions for partial-information,
turn-based, temporal-logic games \cite{apt2011lectures,
    chatterjee2006algorithms}. Given unrealizable specifications,
where a potential cause for unrealizability is the lack of runtime
information, a simple question we investigate is
what new sensing modalities and with what precision shall be included
in order to render the underlying synthesis problem feasible. We focus on
particular safety type temporal logic specifications for which
counterexamples witness the unrealizability. Using such
counterexamples and interpolation-based techniques
\cite{cimatti2012smt}, the method searches for predicates to be
included in the abstraction. We interpret addition of such
newly discovered predicates as abstraction refinements as well as
adding new sensing modalities or increasing the precision of the
existing sensors.
Besides the partial-information, turn-based games (see
\cite{arnold2003games,Martin2006} in addition to the earlier
references mentioned)  the problem we study in this paper has
similarities with the partially observable Markov decision processes
\cite{kaelbling1998planning,pineau2002high,kurniawati2008sarsop}. The
main deviation in the formalism we employ is the inclusion of a second
player which represents a dynamic, possibly adversarial environment,
particularly well suited for reactive synthesis in a number of
applications, for example, autonomous navigation.

The rest of the paper is organized as follows. We begin with an
overview of the setup, problem, and solution approach. In section
\ref{sec:problemformulation}, we discuss some preliminaries as they
build toward a formal statement of the problem. The solution approach
is detailed in the following two sections in which first an
abstraction procedure and then refinements in abstractions based on
counterexamples are presented. This presentation partly follows the
development in \cite{rayna-report}. Section \ref{subsec:sensorconf}
gives an interpretation of the results in the reconfiguration of
sensing modalities and section \ref{subsec:case} is on a case
study. Throughout the paper, we consider motivating and running
examples loosely from the context of autonomous robotic motion
planning subject to temporal logic specifications.

\section{Overview}
\label{sec:overview}

We begin with a running example and an overview of the problem and our solution approach.

\begin{example}
\label{ex}
Consider a robot in the environment as shown in Fig.~\ref{fig:ex} with
two other dynamic obstacles. The position of this robot is represented
by variables $x$ and $y$ in the coordinate system and the initial
position is at $x_0=4$ and $y_0= 3$. At each time instance, it can
apply the control input $u$ to change its position. The domain of $u$
is $Dom(u)=\Sigma= \{\sigma_1= (2,0)^{T}, \sigma_2= (-2,0)^{T},
\sigma_3= (0,1)^{T}, \sigma_4=(0,-1)^{T}\}$. At each time, with input
$\sigma_1$ (resp. $\sigma_2$) the robot can move in the $x$-direction
precisely with $2$ (resp. $-2$) units, however, in the $y$-direction
there is uncertainty: by $\sigma_3$ (resp. $\sigma_4$), the robot
proceeds some distance ranging from $1$ to $1.5$ (resp. from $-
1.5$ to $-1$) unit.  
There are two uncontrollable moving
obstacles, obj1 and obj2, whose behaviors are not known a priori but are known to
satisfy certain temporal logic formulas. 
Suppose as an
example design question that the available sensor for $y$ has slow
sampling rate, for example, the value of $y$ cannot be observed at
every time instance.
Can it eventually reach and stay in $R_2$ while
avoiding all the obstacles and not hitting the walls?
\end{example}
\begin{figure}[H]
\vspace{-2ex}
  \centering
  \includegraphics[width=0.3\textwidth]{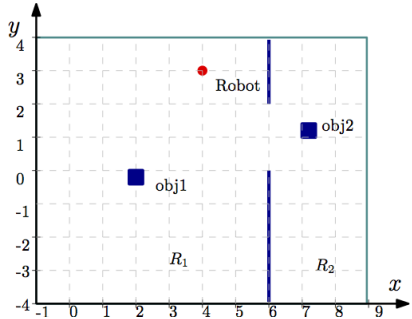}
  \caption{An environment including a robot (represented by the
    red dot) and two dynamic obstacles, obj1, obj2. Regions $R_1$ and
    $R_2$ are connected by a door.}
  \label{fig:ex}
\end{figure}
\vspace{-2ex}

A reactive controller senses the environment and decides an
action in response based on that
sensor reading (or a finite history of sensor readings).  For control
synthesis in reactive systems with partial observation, two problems
are critical.  One is a synthesis problem: given the \emph{current} sensor
design, is there a controller that realizes the specification?  Another
is a design problem: given an unrealizable specification, would it be
possible to find a controller by introducing new sensing modalities?
If so, what are the necessary modalities to add?

To answer these questions, we consider the counterexample guided
abstraction refinement procedure for two-player games with partial
observation in \cite{DimitrovaF08} . First, we formalize the interaction between a system
and its environment as a \emph{(concrete) game}.  A
safety specification determines the winning conditions for both
players. Then, an initial set of  predicates is selected to construct an
\emph{abstract game} with finite state space. The abstraction is
\emph{sound} in the sense that if the specification is realizable with the
system's partial observation in the abstract game, then it is so in
the concrete game.  However, if there does not exist such a
controller, a \emph{counterexample} that exhibits a violation of the
specification can be found. The procedure checks whether
this counterexample exists in the concrete game. If it does not, i.e., it is 
spurious, then the abstract game is refined until a
controller is obtained, or a genuine counterexample is found.

In the latter case, the task is not realizable by the system with its
current sensor design.  Then, we check whether it is realizable under
the assumption of complete information, using the same abstraction
refinement procedure. If the answer is yes, then the set of predicates
obtained in the abstraction refinement indicates the sensing
modalities that are sufficient, with respect to the given specification.

\section{Problem formulation}
\label{sec:problemformulation}

In this section we provide necessary background for presenting the
results in this paper.  
For a variable $x$ we denote with $Dom(x)$ its domain. Given a set of
variables $X$, a \emph{state} $v$ is a function $v: X \rightarrow
\bigcup_{x\in X} Dom(x)$ that maps each variable $x$ to a value in
$Dom(x)$. For $Y \subseteq X$, we write $v(Y)$ for the projection
of $v$ on $Y$.  Let the set of states over $X$ be $V$.  A
\emph{predicate} (atomic formula) $p$ is a statement over a set of variables $X$.  For
a given state $v$, $p$ has a unique value ---$\truev$ (1) or $\falsev$
 (0). We write $p(v)=1$ if $p$ is evaluated to $\truev$ by the state
 $v$. Otherwise, we write $p(v)=0$. Given a state $v \in V$, we write $v\models \varphi$, if the valuation of
$\varphi$ at $v$ is $\truev$. Otherwise, we write $v\not \models
\varphi$.
Given a formula $\varphi$ over a
set of predicates $\mathcal{P}$, let $\mathsf{Preds}(\varphi)\subseteq
\mathcal{P}$ be the set of predicates that occur in $\varphi$. 
 A \emph{substitution} of all variables $X$ in $\varphi$ with
the set of new variables $X'$ is denoted $\varphi(X')$.

\subsection{The model}
\label{subsec:model}
A \emph{(first-order) transition system} symbolically represents an
infinite-state transition system \cite{cimatti2012smt}.
\begin{definition}
A \ac{ts} $\mathbb{C}$ is a tuple $\langle
X, \mathcal{T}, \varphi_{init}\rangle $ with components as follows.
\begin{itemize}
\item $X$ is a finite set of variables. 

\item $\mathcal{T} (X,X')$ is a (quantifier-free) first-order logic formula describing the
  transition relation. $\mathcal{T}$ relates the variables
  $X$ which represent the \emph{current} state, with the
  variables $X'$ which represent the state after this
  transition.
 
\item $\varphi_{init}$ is a (quantifier-free) first-order formula over $X$
  which denotes the set of initial states  of
  $\mathbb{C}$. 
\end{itemize}
\end{definition}

The interaction between system and its environment is captured by a
reactive system formalized as a \ac{ts}.  
  \begin{example}
\label{ex2}
    We consider a modified version of Example \ref{ex} in which the
    environment does not contain any obstacle or internal walls.
    The set of variables is $X= \{x,y, u, t\}$ where $t$ is a
    Boolean variable. When $t=0$, the values of variables $x,y,u$ 
    are updated. Formally,  the transition relation is

$\arraycolsep=1.4pt
\begin{array}{rcl}
  \mathcal{T} & : = &  \big( t \land  t' = \neg t \land x'= x \land y'= y \land (\lor_{\sigma_i \in \Sigma} u'=\sigma_i ) \big)\\ 
              & \lor & \big( \neg t  \land  t' = \neg t  
              \land \big( ( u= \sigma_1 \land x'=x+2 \land y'=y ) \\&
&                              \lor    ( u= \sigma_2 \land x'=x-2 \land y'=y ) \lor \\&&
                                    ( u= \sigma_3 \land x'=x \land y' \ge y+1 \land y' \le y+1.5) \lor \\&&
                                   ( u= \sigma_4
                                \land x'=x \land y'\le y- 1 \land y'\ge y-1.5) \big)\big).  
\end{array}
$
Initially, $\varphi_{init}:= x=4 \land y= 3$ holds.
  \end{example}
  A \ac{ts} can be considered in a game formulation in which the
  system is player 1 and the environment is player 2. For this
  purpose, the set of variables $X$ is partitioned into $X_I \cup X_O
  \cup\{t\}$, where $X_I$ is the set of \emph{input variables},
  controlled by the environment, and $X_O$ is the set of
  \emph{output variables}, controlled by the system, and $t$ is a
  Boolean \emph{turn variable} indicating whose turn it is to make a transition:
  $1$ for the system and $0$ for the environment.  In
  Example~\ref{ex2}, the set of input variables is $X_I=\{x,y\} $, the
  set of output variables is $X_O=\{u \}$, and the turn variable is
  $t$.  We assume the domain of each output variable
  is finite. Without loss of generality \footnote{For a set of output
    variables, each of which has a finite domain, one can always
    construct a single new output variable to replace the set, and the
    domain of this new variable is the Cartesian product of the
    domains of these output variables.}, let $X_O$ be a
  \emph{singleton} $X_O=\{u\}$ and $Dom(u)= \Sigma$, which is a finite
  alphabet.

A  \ac{ts} $\mathbb{C}$ defines a game structure. In
this paper, we
assume that the system and its environment do not perform concurrent
actions, and thus the game structure is turn-based.
  \begin{definition} A \emph{game structure} capturing the
    interactions of a system (player 1) and its environment (player 2)
    in a \ac{ts} $\mathbb{C} = \langle
X, \mathcal{T}, \varphi_{init}\rangle$ is a tuple $G= \langle V, T, I\rangle$
    \begin{itemize}
\item $V=V_1\cup V_2$ is the set of states over $X$. $V_1  =\{v\in V\mid v(t)=1\} $ is the set of states at which player 1 makes a
  move ($t=1$). $ V_2= V\setminus V_1 $ consists of the states at
  which player 2 makes a move.
 \item $T =T_1\cup T_2$ is the transition relation: 
\begin{itemize}
 \item $((x_I,x_O,1), (x'_I,x'_O, 0) ) \in T_1$ if and only if $x_I=x_I'$
  and $\mathcal{T}((x_I,x_O,1), (x'_I,x'_O, 0) )$ evaluates to $\truev$.
\item $((x_I,x_O,0), (x'_I,x'_O, 1) ) \in T_2$ if and only if $x_O= x'_O$ and
  $\mathcal{T} ((x_I,x_O,0), (x'_I,x'_O, 1) )  $ evaluates to $\truev$.
\end{itemize}
 \item $I =\{v\in V \mid v \models \varphi_{init}\}$ is the set of
   initial states.
\end{itemize} 
\end{definition}
A \emph{run} is a finite (or infinite) sequence of states $\rho
=v_0 v_1 v_2\ldots \in V^\ast$ (or $\rho \in V^\omega$) such
that $(v_i,v_{i+1}) \in T$, for each $ 0 \le i < \abs{\rho}$
where $\abs{\rho}$ is the length of $\rho$. We assume the game is
nonblocking, that is, for all $v\in V$, there exists $v'\in V$ such
that $(v,v')\in T$. This can be achieved by including 
``idle'' action in the domain of the output variable.


\begin{definition}[Sensor model]
  Assuming the output variable $u$ and the Boolean variable $t$ are
  globally observable, the sensor model is given as a set of formulas
  $\{\mathcal{O}_x\mid x\in X_I\}$, where for each input $x\in X_I$,
  $\mathcal{O}_x$ is a formula over the set of input variables $X_I $
  such that the value of the input variable $x$ is observable at state
  $v$ if and only if the formula $\mathcal{O}_x $ evaluates to true at
  the state $v$.
\end{definition}

For a state $v\in V$, the set of \emph{observable variables} at $v$ is
$\mathsf{Obs}_{X}(v)=\{x\in X_I\mid v\models \mathcal{O}_x \} \cup
\{t,u\}$.  The \emph{observation} of $v$ is $\mathsf{Obs}(v)=
v(\mathsf{Obs}_X(v))$, which is the projection of $v$ onto the set of
variables observable at $v$.  Two states $v,v'$ are \emph{
  observation-equivalent}, denoted $v\equiv v'$ if and only if
$\mathsf{Obs}(v)=\mathsf{Obs}(v')$. The observation-equivalence can be
extended to sequences of states: let $\mathsf{Obs}(\epsilon) =
\epsilon$ and $\mathsf{Obs}(v
\rho)=\mathsf{Obs}(v)\mathsf{Obs}(\rho)$, for $v\in V$ and $\rho \in
V^\ast$(or $V^\omega$). Two runs $\rho,\rho' \in V^\ast$($V^\omega$) are observation equivalent, denoted $\rho\equiv \rho'$,
if and only if $\mathsf{Obs}(\rho)=\mathsf{Obs}(\rho')$.

This sensor model is able to capture both global and local sensing
modalities: if a variable $x$ is globally observable (globally
unobservable), $\mathcal{O}_x =\top $ (resp. $\mathcal{O}_x
=\perp$). Here $\top$ and $\perp$ are symbols for unconditional true
and false, respectively. As an example of a local sensing modality,
consider a sensor model in which an obstacle at $(px,py)$ is
observable if it is in close proximity of the robot at $(x,y)$, can be
described as $\mathcal{O}_{px} =(-2 \le px - x \le 2 )\land (-2 \le
py-y \le 2) \land \mathcal{O}_x\land \mathcal{O}_y$.

\subsection{Specification language}
We use \ac{ltl} formulas \cite{emerson1990temporal} to specify a set
of desired system properties such as safety, liveness, persistence  and
stability.  

In this paper, we consider safety objectives: the given specification
is in the form $\square \neg \varphi_{err}$, where $\square$ is the
\ac{ltl} operator for ``always'' and $\varphi_{err} $ is a formula
specifying a set of unsafe states $E=\{v \in V \mid v\models
\varphi_{err}\}.$ The objective of the system is to always avoid the
states in $E$ and the goal of the environment is to drive the game
into a state in $E$.

Let $v_0\in I$ be the designated initial state of the system.  We
obtain the game $\mathcal{G}^c=\langle V, v_0, T,E\rangle$,
corresponding to the reactive system $\mathbb{C}$ with the initial
state $v_0$.  From now on, $\mathcal{G}^c$ and $\mathbb{C}$ are
referred to as the \emph{concrete} game and \emph{concrete} reactive
system, respectively. The state set $V$ is the set of \emph{concrete}
states.  
A run $\rho \in V^\omega$ is winning for player 1 if it does not 
contain any state in the set of unsafe states $E$.

A \emph{strategy} for player $i$ is a function $f_i:V^\ast V_i
\rightarrow V_j$ which maps a finite run $\rho$ into a state
$f_i(\rho) \in V_j$, to be reached, such that $(v,v') \in T$, where $v$ is the last
state in $\rho$, $v' = f_i(\rho)$ and $(i,j)
\in\{(1,2), (2,1)\}$. The set of runs in $\mathcal{G}$ with the
initial state $v_0\in I$ induced by a pair of strategies $(f_1,f_2)$
is denoted by $Out_{v_0}(f_1,f_2)$.  Given the initial state $v_o$, a
strategy $f_1$ is winning for player 1, if and only if for any
strategy $f_2$ of player 2, any run in $Out_{v_0}(f_1,f_2)$ is winning
for player 1. A winning strategy for player 2 is defined dually.

Since the system (player 1) has partial observability, the strategies
it can use are limited to the following class.
\begin{definition}
  An \emph{observation-based} strategy for player 1 is a function
  $f_1: V^\ast V_1 \rightarrow V_2$ that satisfies: \begin{inparaenum}[(1)]
\item $f_1$ is a strategy of player 1; and 
\item for all $\rho_1,\rho_2$, if $\rho_1\equiv \rho_2$, then given
  $v= f_1(\rho_1), v'= f_1(\rho_2)$, it holds that for the output variable $u $, $v(u)=
  v'(u)$, and $v(t)=v'(t)$.
\end{inparaenum}
\end{definition}
For a game with partial observation, one can use knowledge-based subset
construction to obtain a game with complete observation. The winning
strategy for player 1 in the latter is an observation-based winning
strategy for player 1 in the former. The reader is referred to
\cite{chatterjee2006algorithm} for the solution of games with partial observation.
\subsection{Problem statement}
\noindent
We now formally state the problem investigated in this paper.
\begin{problem}
  Given a transition system $\mathbb{C}$ with the initial state $v_0
  \in I$, with a sensor model $\{\mathcal{O}_x \mid x\in X_I\}$ and
  a safety specification $\square \neg \varphi_{err}$, determine
  whether there exists an observation-based strategy (i.e. controller)
  $f_1$ such that for any strategy of the environment $f_2$ and for
  any $\rho \in Out_{v_0}(f_1,f_2)$, $\rho \models \square \neg
  \varphi_{err}$. If no such controller exists, then determine a new
  sensor model for which one can find such a controller, if there
  exists one.
\end{problem}

\section{Predicate abstraction}

Since the game $\mathcal{G}^c$ may have a large number of states, the synthesis methods for finite-state games cannot be directly applied or are not efficient. To remedy this problem, we apply an abstraction procedure which combines predicate abstraction and knowledge-based
subset construction and yields an
\emph{abstract finite-state} game with \emph{complete} information $\mathcal{G}^a$
from the (symbolically represented) concrete game $\mathcal{G}^c$.

\subsection{An abstract game}

Given a \emph{finite} set of predicates, the  abstraction procedure
constructs a finite-state reactive system (game structure).  
Let
$\mathcal{P}=\{p_1,p_2,\ldots,p_{N}\} $ be an indexed set of
predicates over variables $X$.  The \emph{abstraction function}
$\alpha _{\mathcal{P}}: V\rightarrow \{0,1\}^{\abs{\mathcal{P}}}$ maps
a concrete state into a binary vector as
follows. 
\[
\alpha _{\mathcal{P}} (v)= s \in \{0,1\}^{\abs{\mathcal{P}}} \text{ iff }\ s(i)=p_i(v),
\textrm{for all}\ p_i\in \mathcal{P},
\]
where $s(i)$ is the $i$th entry of binary vector $s$.  
The  \emph{concretization function} $\gamma_{\mathcal{P}}:
\{0,1\}^{\abs{\mathcal{P}}} \rightarrow 2^V $ does the reverse:
\[ \gamma_{\mathcal{P}}(s)= \{v\mid \forall\ p_i\in \mathcal{P}.\ 
p_i(v)= s(i)\}.
\] 

In the following, we omit the subscript $\mathcal{P}$ in the notation
for the abstraction and concretization functions wherever they are
clear from the context.  The following lemma shows that with a proper
choice of predicates, we can ensure that a set of concrete states
grouped by the abstraction function shares the same set of observable
and unobservable variables.
\begin{lemma}
\label{lm1}
Let $\bigcup_{x\in X_I} \mathsf{Preds} (\mathcal{O}_x)\subseteq
\mathcal{P}$. Then for any binary vector $s \in
\{0,1\}^{\abs{\mathcal{P}}}$ and any two states $v,v'\in \gamma(s) \ne
\emptyset$, it holds that $
\mathsf{Obs}_X(v) = \mathsf{Obs}_X(v')$.
\end{lemma}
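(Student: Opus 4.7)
The plan is to unfold the definitions of $\mathsf{Obs}_X(\cdot)$ and $\gamma(\cdot)$ and reduce the claim to the compositional fact that the truth value of a quantifier-free formula at a state is determined by the truth values of the atomic predicates occurring in it.

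First I would fix an arbitrary binary vector $s \in \{0,1\}^{|\mathcal{P}|}$ and two states $v, v' \in \gamma(s)$. By definition of $\mathsf{Obs}_X$, we have $\mathsf{Obs}_X(v) = \{x \in X_I \mid v \models \mathcal{O}_x\} \cup \{t,u\}$ and analogously for $v'$. Since the component $\{t,u\}$ is shared, the equality $\mathsf{Obs}_X(v) = \mathsf{Obs}_X(v')$ reduces to showing that for every $x \in X_I$, $v \models \mathcal{O}_x$ if and only if $v' \models \mathcal{O}_x$.

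Next I would exploit the definition of $\gamma$: $v \in \gamma(s)$ means $p_i(v) = s(i)$ for every $p_i \in \mathcal{P}$, and likewise $p_i(v') = s(i)$. Hence $p(v) = p(v')$ for all $p \in \mathcal{P}$. By the hypothesis $\bigcup_{x\in X_I} \mathsf{Preds}(\mathcal{O}_x) \subseteq \mathcal{P}$, every atomic predicate appearing in $\mathcal{O}_x$ is one of these $p_i$'s, so $v$ and $v'$ agree on the valuation of every atomic predicate occurring in $\mathcal{O}_x$.

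Finally I would invoke the standard (easy) compositional fact that a quantifier-free Boolean combination is evaluated purely from the truth values of its atomic subformulas: a straightforward induction on the structure of $\mathcal{O}_x$ shows $v \models \mathcal{O}_x \iff v' \models \mathcal{O}_x$. Applying this for every $x \in X_I$ yields the desired equality $\mathsf{Obs}_X(v) = \mathsf{Obs}_X(v')$. There is no real obstacle here; the only point requiring a sentence of care is that $\mathcal{O}_x$ is treated as a Boolean combination of its atomic predicates, which is justified because the definitions in Section~\ref{subsec:model} take $\mathcal{O}_x$ to be a (quantifier-free) formula whose semantics at a state is fully determined by the valuations of $\mathsf{Preds}(\mathcal{O}_x)$.
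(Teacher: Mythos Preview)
Your proposal is correct and follows essentially the same approach as the paper: both argue that membership in $\gamma(s)$ forces agreement on every predicate in $\mathcal{P}$, hence on every atomic predicate of each $\mathcal{O}_x$ by the inclusion hypothesis, and therefore on the truth value of $\mathcal{O}_x$ itself. Your version is simply a more explicit unfolding of the same reasoning, spelling out the reduction to the $\{t,u\}$ component and the compositional/inductive step that the paper leaves implicit.
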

\begin{proof}Since for any $v,v'\in \gamma( s)$, $\alpha(v)=\alpha(v')=s$, 
  for any $p\in \mathcal{P}$, $p$ has the same truth value at states $v
  $ and $v'$. Thus, for any $x\in X_I$, the formula
  $\mathcal{O}_x$, for which $\mathsf{Preds}(\mathcal{O}_x)\subseteq\mathcal{P}$, has the same
  value at $v$ and $v'$. Hence, if $x$ is observable (or
  unobservable) at $v$, then it must be observable (or unobservable)
  at $v'$ and vice versa. 
\end{proof}

Intuitively, by including the predicates in the formulas
defining the sensor model, for each $s \in \{0,1\}^{\abs{\mathcal{P}}}$, the set of 
concrete states $\gamma(s)$ share the same sets of observable and
unobservable variables. Hence, we use $X_v(s)$ to denote set of
observable/visible input variables in $s$ and $X_h(s) = X_I \setminus X_v(s)$ for the set of
unobservable/hidden input variables.

A predicate $p$ is observable at a state $v$ if and only if the variables
in $p$ are observable at $v$. According to Lemma~\ref{lm1}, if
there exists $ v\in \gamma(s)$ such that $p$ is observable at $v$, then $p$ is
observable for all $v\in \gamma(s)$ and we say that $p$ is observable at
$s$. Slightly abusing the notation $\mathsf{Obs}(\cdot)$, the
\emph{observation of a binary vector} $s$ is
$\mathsf{Obs}(s) = \{ (p_i,s(i))\mid p_i \text{ is observable at } s  \}$,
which is a set of assignments for observable predicates. Two binary
vectors $s,s'$ are observation-equivalent, denoted $s\equiv s'$, if and
only if
$\mathsf{Obs}(s)= \mathsf{Obs}(s')$.

The abstraction of the concrete game $\mathcal{G}^c = \langle
V,v_0, T, E\rangle$
with respect to a finite set of predicates $\mathcal{P}$ is a game
with \emph{complete information}
$\alpha(\mathcal{G}^c,
\mathcal{P}) =\mathcal{G}^a = \langle S^a, s_0^a, T^a, E^a\rangle$:
\begin{itemize}
\item $S^a= S_1^a \cup S_2^a$ is the set of abstract states with sets of player 1's and player 2's abstract states respectively, \\
 $S_1^a = \{ s^a \mid \exists v\in V_1.\ s^a \subseteq \{s\mid s\equiv \alpha(v)\},s^a\not = \emptyset\}$ and\\
 $S_2^a= \{ s^a \mid \exists v\in V_2.\  s^a \subseteq \{s\mid s\equiv \alpha(v)\},s^a\not = \emptyset\}$.
\item $s_0^a= \{s \in \{0,1\}^{\abs{\mathcal{P}}} \mid s \equiv
  \alpha(v_0)\}$ is the initial state.
\item $T^a = T_1^a\cup T_2^a$ where 
\begin{itemize}
\item $(s^a_1,s^a_2)\in T_1^a$ if and only if the following conditions
  (1), (3) and (4) are satisfied.
\item $(s_1^a, s_2^a)\in T_2^a$ if and only if the following
  conditions (2), (3) and (4)
  are satisfied.
\end{itemize}
\begin{enumerate}[(1)]
\item \label{must} for every
    $s \in s^a_1$ and every
  $v\in \gamma(s)$, there exist $s'\in s^a_2$ and
  $v'\in \gamma(s')$ such that $(v,v')\in T_1$; 
\item \label{may}  there exists $s\in s_1^a$, $v\in \gamma(s)$,
  $s'\in s_2^a$ and $v'\in \gamma(s')$ such that $(v,v')\in T_2$;
\item\label{rel2} for every  $s' \in s^a_2 $, there exist $s \in s^a_1$, $v\in \gamma(s)$ and $v'\in \gamma(s')$ such that
  $(v,v')\in T$;
\item \label{rel3} for every $s_1', s_2' \in \alpha(V)$, if
  $s_1'\in s^a_2$, $s_1' \equiv s_2'$ and there exist $s\in s_1^a$, $v\in \gamma(s)$ and $v'\in \gamma(s_2')$ with $(v,v')\in T$, then also $s_2 ' \in s^a_2$.
\end{enumerate}
\item $E^a = \{s^a \mid \exists s \in s^a.\ \exists v\in \gamma(s).\ v
  \in E\}$ is the set of unsafe states. 
\end{itemize}
In what follows, we refer to a state $s^a\in S^a$ as an \emph{abstract
  state}.  By definition, each $s^a$ in $\mathcal{G}^a$ is a set
of observation-equivalent binary vectors in $\alpha(V)$.


We relate a binary vector $s \in \{0,1\}^{\abs{\mathcal{P}}}$ with a formula $[s]$ that is a conjunction such that 
$[s] = \land_{0\le i \le \abs{\mathcal{P}}} h_i $ where if $s(i)=1$,
then $h_i = p_i$, otherwise $h_i = \neg p_i$. 
Further, for any $ s^a \in S^a$, we define the following formula in disjunctive normal
form $[s^a] =\lor_{s\in s^a} [s]$.

\addtocounter{example}{-1}
\begin{example}[cont.]We assume $x$ is globally
  observable and $y$ is globally unobservable and require
  that the robot shall never hit the boundary, that is, $\square \neg
  \varphi_{err}$ where $\varphi_{err} = \big(t =0\land (x\ge 9 \lor y
  \ge 4 \lor x \le -1 \lor y \le -4 ) \big)$. Let $\varphi_{init}: = (x= 4 \land y= 3 \land u = \sigma_1 \land t = 1)$.
 Let $\mathcal{P}= \{
  x\ge 9, y \ge 4, x \le -1, y\le -4, u=\sigma_1,
  u=\sigma_2,u=\sigma_3,u=\sigma_4, t=1\}$. The initial state of
  $\mathbb{C}$ is $v_0 =(4, 3, \sigma_1, 1)$, and the corresponding
  initial state in $\mathcal{G}^a$ is $s^a_0 =\{ (00001 000 1) \}$
  where the values for the predicates in $s^a_0$ are given in the same
  order in which they are listed in $\mathcal{P}$. Given $v'= (4, 3 ,
  \sigma_2, 0)$, since $(v_0,v')\in T$, we determine $(s^a_0 ,s^a_1)
  \in T^a$ where $s^a_1= \{ (000 0 01000)\} $ indicating $u=\sigma_2$
  and $t=0$.
\end{example}
We show that by a choice of predicates, it is ensured that for any
$s^a \in S^a$, all concrete states in the set $ \{v\mid \exists s\in
s^a.\ v\in \gamma(s)\}$ share the same observable and unobservable
variables.
\begin{lemma}
  If $\bigcup_{x\in X_I} \mathsf{Preds} (\mathcal{O}_x)\subseteq
  \mathcal{P}$, then for any $s^a\in S^a$ and $v,v'\in \{v\mid
  \exists s\in s^a.\ v\in \gamma(s)\}$, it holds that
  $\mathsf{Obs}_X(v)=\mathsf{Obs}_X(v')$.
\end{lemma}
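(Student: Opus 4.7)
The plan is to reduce the statement to Lemma~\ref{lm1} by exploiting the defining property of abstract states: each $s^a \in S^a$ is a set of pairwise observation-equivalent binary vectors. Given $v, v'$ in $\{w \mid \exists s \in s^a.\, w \in \gamma(s)\}$, I pick witnesses $s_1, s_2 \in s^a$ with $v \in \gamma(s_1)$ and $v' \in \gamma(s_2)$. By the construction of $S^a$, both $s_1$ and $s_2$ lie in $\{s \mid s \equiv \alpha(w)\}$ for a common $w$, so $s_1 \equiv s_2$, i.e., $\mathsf{Obs}(s_1) = \mathsf{Obs}(s_2)$. Applying Lemma~\ref{lm1} to $s_1$ and $s_2$ separately shows that $\mathsf{Obs}_X(v)$ depends only on $s_1$ and $\mathsf{Obs}_X(v')$ only on $s_2$, so the problem reduces to showing that observation-equivalence of binary vectors forces equality of the observable-variable sets at their concretizations.

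For this reduction, I fix $x \in X_I$ and argue $v \models \mathcal{O}_x \Leftrightarrow v' \models \mathcal{O}_x$. Since $\mathsf{Preds}(\mathcal{O}_x) \subseteq \mathcal{P}$, the truth value of $\mathcal{O}_x$ at $v$ is a Boolean function of the bits $s_1(p)$ for $p \in \mathsf{Preds}(\mathcal{O}_x)$, and analogously for $v'$ and $s_2$. Each such predicate $p$ is observable at $s_1$ exactly when the variables it mentions are observable there, and this observability information is the very content of $\mathsf{Obs}(s_1)$. Hence by $s_1 \equiv s_2$, every bit of $s_1$ relevant to evaluating $\mathcal{O}_x$ is matched by the same bit of $s_2$, so $\mathcal{O}_x$ evaluates identically at $v$ and $v'$. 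Therefore $x \in \mathsf{Obs}_X(v)$ iff $x \in \mathsf{Obs}_X(v')$, and appending $\{t,u\}$ to both sides yields $\mathsf{Obs}_X(v) = \mathsf{Obs}_X(v')$.

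The main obstacle is the middle step, where one has to chase the interleaved definitions of variable- and predicate-observability: observability of a predicate is defined through observability of its variables, which is itself defined via the sensor formulas $\mathcal{O}_y$. The argument works precisely because the hypothesis $\bigcup_{x\in X_I}\mathsf{Preds}(\mathcal{O}_x) \subseteq \mathcal{P}$ forces these sensor formulas to be Boolean combinations of predicates tracked by the abstraction, so the information needed to evaluate them on $s_1$ (and $s_2$) is preserved under $\equiv$. Once that subtlety is unpacked, the two appeals to Lemma~\ref{lm1} glue together into the global statement.
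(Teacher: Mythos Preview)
Your reduction via Lemma~\ref{lm1} to the claim ``$s_1\equiv s_2$ forces the observable-variable sets to coincide'' is exactly the paper's first move, so the overall architecture matches. The divergence is in how that residual claim is discharged.

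You attempt a forward evaluation: the truth of $\mathcal{O}_x$ at $v$ (resp.\ $v'$) is a Boolean function of the bits $s_1(p)$ (resp.\ $s_2(p)$) for $p\in\mathsf{Preds}(\mathcal{O}_x)$, and you want $s_1\equiv s_2$ to force these bits to agree. But observation-equivalence only pins down the values of the \emph{observable} predicates, and whether a given $p\in\mathsf{Preds}(\mathcal{O}_x)$ is observable at $s_1$ depends on which variables are observable there---precisely the quantity you are trying to show equal at $s_1$ and $s_2$. Your third paragraph names this as ``the main obstacle'' but does not break the loop: the sentence ``the information needed to evaluate them on $s_1$ (and $s_2$) is preserved under $\equiv$'' is the assertion, not its justification.

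The paper runs the inference in the opposite direction. From $\mathsf{Obs}(s_1)=\mathsf{Obs}(s_2)$ it first reads off that the \emph{sets} of observable predicates (the domains of these assignment sets) coincide; since predicate-observability is determined by variable-observability, it then concludes $X_v(s_1)=X_v(s_2)$ and $X_h(s_1)=X_h(s_2)$ directly, never needing to match bits of predicates whose observability status is in doubt. To repair your forward argument you would essentially have to establish $X_v(s_1)=X_v(s_2)$ first---at which point the lemma is already proved.
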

\begin{proof}
  By Lemma \ref{lm1}, since for any $s\in
  \{0,1\}^{\abs{\mathcal{P}}}$, for any $v,v' \in \gamma(s) \ne
  \emptyset$, $\mathsf{Obs}_X(v)=\mathsf{Obs}_X(v')$, then it suffices to
  prove that for any $s,s'\in s^a$, $X_v(s)=X_v(s')$ and
  $X_h(s)=X_h(s')$. By definition, $s \equiv s'$ implies that the set
  of observable (unobservable) predicates is the same in both $s$ and
  $s'$. Thus, the set of observable (unobservable) variables that
  determines the observability of predicates has to be the same in both $s$ and $s'$. That is, $X_v(s)=X_v(s')$ and $X_h(s)=X_h(s')$.
\end{proof}Let $X_v(s^a)$ (resp. $X_h(s^a)$) be the
observable (resp. unobservable) input variables in the abstract  state
$s^a$. That is, $X_i(s^a)=X_i(s)$ for any $s\in s^a$, for $i\in \{v,
h\}$.

\subsection{Concretization of strategies}
In the abstract game $\mathcal{G}^a$, there exists a winning strategy
for one of the players. We show that a winning strategy for the system
in $\mathcal{G}^a$ can be concretized into a set of observation-based
winning strategies for the system in $\mathcal{G}^c$.

For $(i,j)\in\{(1,2),(2,1)\}$, the
  \emph{concretization of a strategy} $f_i: (S^a)^\ast S^a_i
  \rightarrow S^a_j$ in $\mathcal{G}^a$ is a \emph{set} of strategies in $\mathcal{G}^c$, denoted
  $\gamma(f_i)$ and can be obtained as follows. Consider $\rho^c \in
  V^\ast$, $\rho \in S^\ast$, $\rho^a\in (S^a)^\ast$ in the following,
  where
\vspace{-1ex}
\[\arraycolsep=2pt
\begin{array}{l ccccccl}
\rho^c &= & v_0 & v_1 &v_2 & \ldots & v_n &,\\
\rho &= & s_0 & s_1 & s_2 & \ldots & s_n  &, \\
\rho^a &= & s^a_0 & s^a_1 & s^a_2 & \ldots & s^a_n &. \\
\end{array}
\]  
and $v_i \in \gamma(s_i)$, $s_i\in s^a_i$ for each $i: 0\le i\le n$.
Given $f_i(\rho^a ) = s^a_{n+1}$, the output $f_i^c (\rho^c)= v_{n+1}
$ such that there exist $ s\in s^a_{n+1}$ and $v_{n+1}\in \gamma(s) $ such that $
(v_n,v_{n+1})\in T$. In other words, $v_{n+1}$ is a concrete state
reachable from the current state $v_n$ and can be abstracted into a
binary vector $s$ in the abstract state $s^a_{n+1}$. Intuitively,
given the run $\rho^c$, one can find a run in the abstract system
$\rho^a$, and uses the output of $f_i$ on $\rho^a$ to generate an
abstract state. Then $f_i^c$ picks a reachable concrete state, which
can also be abstracted into a binary vector contained this abstract
state. A strategy $f$ is \emph{concretizable} if $\gamma(f)\ne
\emptyset$. Otherwise it is \emph{spurious}.

\begin{theorem}\label{thm:soundness}
The concretization $\gamma(f_1)$ of a player 1's winning strategy
$f_1: (S^a)^\ast S^a_1\rightarrow S^a_2$ in $\mathcal{G}^a$ is a non-empty set that consists of observation-based winning
strategies for player 1 in the concrete game $\mathcal{G}^c$.
\end{theorem}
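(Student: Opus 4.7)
The plan is to establish three properties in turn: (i) the set $\gamma(f_1)$ is non-empty, (ii) every $f_1^c \in \gamma(f_1)$ is a valid observation-based strategy for player 1 in $\mathcal{G}^c$, and (iii) every such $f_1^c$ is winning. The unifying engine is a step-by-step simulation argument that aligns any concrete play with an abstract play consistent with $f_1$, exploiting precisely the transition conditions (1)--(4) that were built into $T^a$.

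For (i), I would proceed by induction on the length of the concrete prefix $\rho^c$. The base case uses the definition of $s_0^a$ as the observation-equivalence class of $\alpha(v_0)$, so $\alpha(v_0)\in s_0^a$. At the inductive step, suppose $\alpha(v_n)\in s_n^a$ and $f_1(\rho^a)=s_{n+1}^a$. If $v_n\in V_1$, the \emph{must} condition (1) in the definition of $T_1^a$ guarantees the existence of $s'\in s_{n+1}^a$ and $v_{n+1}\in\gamma(s')$ with $(v_n,v_{n+1})\in T_1$, so some legal output $f_1^c(\rho^c)=v_{n+1}$ exists. The analogous statement for player 2's moves follows from conditions (2)--(4). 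So $f_1^c$ is always well-defined on the prefixes that arise during play.

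For (ii), given $\rho_1^c\equiv\rho_2^c$, I would show by induction on length that they induce the same abstract run $\rho^a$. This uses the fact that each abstract state is, by construction, a set of binary vectors closed under $\equiv$, so observation-equivalent concrete states lie in concretizations of the \emph{same} abstract state. Hence $f_1(\rho^a)=s_{n+1}^a$ is identical on both sides. Since $u$ and $t$ are globally observable, Lemma~1 (applied through the observation of binary vectors) forces every $s\in s_{n+1}^a$ to assign the same values to $u$ and $t$; thus $f_1^c(\rho_1^c)$ and $f_1^c(\rho_2^c)$ agree on $u$ and $t$, which is exactly what the observation-based condition requires.

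For (iii), fix any environment strategy $f_2$ and any $\rho^c\in Out_{v_0}(f_1^c,f_2)$. By the construction of $\gamma(f_1)$, $\rho^c$ comes equipped with an accompanying abstract run $\rho^a$ whose $s_i^a$ contains $\alpha(v_i)$, and conditions (1)--(4) certify that $\rho^a$ is a legal play in $\mathcal{G}^a$ consistent with $f_1$ against some abstract player 2 strategy. Since $f_1$ is winning in $\mathcal{G}^a$, no $s_i^a$ lies in $E^a$; by the contrapositive of the definition $E^a=\{s^a\mid \exists s\in s^a.\ \exists v\in\gamma(s).\ v\in E\}$, this yields $v_i\notin E$ for every $i$, and so $\rho^c\models \square\neg\varphi_{err}$. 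The delicate step I expect to wrestle with is showing that the abstract player 2 moves along $\rho^a$ are genuinely legal in $\mathcal{G}^a$: the concrete adversary may pick any observation-equivalent successor, so the closure condition (4) together with the existence condition (3) must be invoked carefully to guarantee that every adversarial concrete move is covered by an abstract transition emanating from $s_i^a$, leaving no gap through which the adversary could escape the simulation.
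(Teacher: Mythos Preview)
The paper's own proof is a single sentence: ``Follows from the proof in \cite{rayna-report}.'' There is therefore nothing substantive in the paper itself to compare your argument against; the entire burden is deferred to an external technical report.

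Your three-part decomposition --- non-emptiness via the \emph{must}-condition~(1), observation-basedness via the fact that abstract states are closed under $\equiv$ and that $u,t$ are globally observable, and the winning property via the contrapositive of the definition of $E^a$ combined with a step-by-step simulation using conditions (2)--(4) --- is precisely the natural argument one expects the cited report to contain, and it is structurally sound. In that sense your proposal goes well beyond what the paper actually writes. One small point worth tightening in part~(ii): the abstract run $\rho^a$ accompanying a concrete run $\rho^c$ is not \emph{a priori} uniquely determined by the concretization recipe, so rather than asserting that $\rho_1^c\equiv\rho_2^c$ forces the \emph{same} $\rho^a$, it is cleaner to argue that any two abstract runs tracking observation-equivalent concrete runs must themselves agree at every step on the observable predicates (hence on $u$ and $t$), which is what the observation-based condition actually demands.
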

\begin{proof}
Follows from the proof in \cite{rayna-report}.
\end{proof}

In case there is no winning strategy for player 1 in $\mathcal{G}^a$,
the synthesis algorithm gives us a winning strategy for player 2 in $\mathcal{G}^a$,
which we refer to as \emph{counterexample}. Then we need to check if
it is spurious, as explained in the next
section.

\section{Abstraction refinement}
We consider an initial set of predicates $\mathcal{P}$ which consists
of the predicates occurring in $\varphi_{err}$, the predicates
describing the output $u$ of the system, and those
occurring in the sensor model.  With this initial choice of predicates,
if player 1 wins the game $\mathcal{G}^a =
\alpha(\mathcal{G}^c,\mathcal{P})$, then the abstraction does not need
to be further refined,  according to Theorem \ref{thm:soundness},
the winning strategy of player 1 is concretizable in the concrete
game.  However, if player 2 wins, there exists a deterministic winning strategy $f_2: (S^a)^\ast
S^a_2 \rightarrow S^a_1$ in the game
$\mathcal{G}^a$. The next step is to check if $f_2$ is
spurious. If it is, then the abstract model is too coarse and needs to
be further refined.

\subsection{Constructing abstract counterexample tree}
We construct a formula from the strategy tree generated from this
counterexample that characterizes the concretizability of this counterexample in the
concrete system $\mathbb{C}$, and then we construct a formula from the
 tree. If the formula is satisfiable, then the counterexample is genuine.

 Given the initial state $s^a_0 $, the \ac{act} for $f_2$ is
 $\mathbb{T}(f_2,s^a_0 ) = (\mathcal{N}, \mathcal{E})$ where
 $\mathcal{N}$ are nodes and $\mathcal{E} \subseteq \mathcal{N}\times
 \mathcal{N}$ are edges.  Each node $n$ in $\mathcal{N}$ is labeled by
 a state $s^a\in S^a$ and we denote the labeling $n:s^a$.  A node
 $n:s^a$ belongs to player $i$ if $s^a \in S^a_i$, for $i =1,2$.

In the case of a safety specification, $\mathbb{T}(f_2, s^a_0 )$ is a
finite tree in which the following conditions hold. 
\begin{inparaenum}[1)]
\item The root $0$ is labeled by $s^a_0$, that is, $0: s^a_0$.
\item If $n:s^a$ is a player 1's node and $n$ is not a leaf, then for
  each $t^a $ such that $(s^a,t^a)\in T^a$, add a new child $m$ of $n$
  and label $m$ with $t^a$. Let $n \xrightarrow{\sigma} m$ for which
  $[t^a] \implies u=\sigma $.
\item If $n: s^a$ is a player 2's node and $n $ is not a leaf, then
  add one child $m$ of $n$, labeled with $t^a= f_2(\rho)$, where
  $\rho$ is the sequence of nodes' labels (states) on the path from
  the root to the node $n$. Let $n \xrightarrow{\epsilon}m$ where
  $\epsilon$ is the empty string. 
\item For a node $n:s^a$, $n$ is a leaf if either $s^a\in E^a$ or there is no outgoing transition from $s^a$.
\item Each node has at most one parent.
\end{inparaenum}


We illustrate the \ac{act} construction on the small
example. Fig.~\ref{ex:act} shows a fragment of \ac{act} for
Example~\ref{ex2}.  First we define the root $0$, labeled with the
abstract state $s^a_0$. At $s^a_0$,
player 1 can select any output in $\Sigma$.  Therefore, the children
of $0$ are $1,2,3,4$, one for each input in $\Sigma$.  For instance,
the output $\sigma_2$ labels the edge from $0$ to $2$ and we have
$2:s^a_2$. The only child of $2$ is $6$, labeled with $s^a_6 =
\{(001001001)\}$.  Clearly, the actual value of $x$ after executing
$\sigma_2$ is $2$. Yet the reached state $s_6^a$ in which the
predicate $x \le-1$ is $\truev$ is because there exists some $x \in
(-1, 1] $ at state $s_2^a$, and will make $x \le -1$ satisfied after
action $\sigma_2$. This is caused by the coarseness of the
abstraction.  If player 1 takes action $\sigma_3$, then it will have
no information about the value of the predicate $(y\ge 4)$, as this
predicate is not observable. In Fig.~\ref{ex:act}, each state $s^a_i,
0\le i \le 7$ is related with a formula $[s^a_i]$ (shown below the figure).
\vspace{-2ex}
\begin{figure}[H]
\centering
\includegraphics[width=0.45\textwidth]{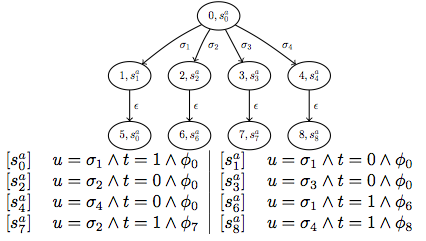}
\caption{A fragment of \ac{act} for Example~\ref{ex2}. }
\label{ex:act}
\end{figure}
\vspace{-2ex}
Note that nodes
  $5$ and $0$ are labeled with the same state $s^a_0$. In the formulas
  $[s^a_i]$, 
$\begin{array}{lll}
  \phi_0 & = & \neg (x\ge 9) \land \neg (x \le-1) \land \neg (y\ge 4) \land \neg (y \le-4), \\
  \phi_6 & = & \neg (x\ge 9) \land (x \le-1) \land \neg (y\ge 4) \land \neg (y \le-4), \\
  \phi_7 & = & \neg (x\ge 9) \land \neg (x \le-1) \land \neg (y \le-4),  \\
  \phi_8 & = & \neg (x\ge 9) \land \neg (x \le-1) \land \neg (y\ge 4). \\
 \end{array}
 $

For a node $n\in \mathcal{N}$, $C(n)$ is the set of children of $n$
and $\mathsf{Paths}(n) \subseteq \mathcal N^*$ is the set of paths
from the node $n$ to a leaf.  For a path $\rho \in \mathcal{N}^\ast$,
the \emph{trace} of $\rho$, denoted $\mathsf{Trace}(\rho)\in
\Sigma^\ast$, is the sequence of labels on the edges in the path.  A
node $n \in \mathcal{N}$ is related with a set of traces
$\mathsf{Traces}(n)= \{ \mathsf{Trace}(\rho)\mid \rho \in
\mathsf{Paths}(n)\}$. For a leaf, $\mathsf{Traces}(n)=\{\epsilon\}$ by
default. For example, $\mathsf{Trace}(0\xrightarrow{\sigma_1}1
\xrightarrow{\epsilon} 5)=\sigma_1 \epsilon= \sigma_1$.

Note that in the tree structure defined here, for each node $n\in \cal
N$, there exists exactly one path from the root to $n$, and hence
there is one trace $w \in \Sigma^\ast$ that labels that path.



We annotate each node $n:s^a$ with a set of variables $\mathcal{X}^n$
as $n:s^a:\mathcal{X}^n$ where $\mathcal{X}^n=\{X^{n,w} \mid w\in
\mathsf{Traces}(n) \}$ and $X^{n,w} = (X_v^{n}, X_h^{n,w},
u^{n},t^{n})$ where $X_v^{n} \cup \{t^{n},u^{n} \}$ are observable
variables in $s^a$ and $X_h^{n,w}$ are hidden variables in $s^a$ when
the trace from $n$ is $w$.  
For example, we annotate $0$ with
$\mathcal{X}^0=\{X^{0,w} = (x^0, y^{0,w}, u^0, t^0)\mid
w\in\mathsf{Traces}(0) =\{ \sigma_1,\sigma_2,\sigma_3,\sigma_4\}\}$ as
$y$ is not observable.  With this annotation, the unobservable
variables $X_h$ at node $n$ can be assigned with different values for
different traces from $n$. It corresponds to the fact that the
concrete states, grouped into an abstract state, share the same values
for observable variables but may have different values for
unobservable ones.

In what follows, we relate a trace with a \emph{trace formula}. By
checking the satisfiability of a \emph{tree formula}, built from trace
formulas with a \ac{smt} solver, we can determine whether the
counterexample is spurious.

\subsection{Analyzing the counterexample}
Given a trace $w \in \mathsf{Traces}(n)$, the trace formula $F(n,w)$ is
constructed recursively as follows. 

\begin{itemize}
\item If $n:s^a : \mathcal{X}^n$ is a leaf, then $\mathcal{X}^n
  =\{X^{n,\epsilon}\}$ is a singleton. Let $
  F(n,w)=[s^a](X^{n,\epsilon}) $, which is satisfiable if there exists
  a concrete state $v$ for $X^{n,\epsilon}$ such that
  $[s^a](v)=\truev$.
  
\item If $n:s^a:\mathcal{X}^n$ is a player 1's node and not a leaf,
  then for each $w =\sigma w' \in \mathsf{Traces}(n)$, for each child
  $m:t^a:\mathcal{X}^m$ such that $n\xrightarrow{\sigma} m$, let  
\vspace{-1ex}
  \begin{multline*} F(n,m,w)=F(m,w') \land [s^a](X^{n,w})\\ \land
    [t^a](X^{m,w'}) \land u^m =\sigma \land \mathcal{T}(X^{n,w},X^{m,
      w'}) \end{multline*} where $F(m,w') $ is false if $w'\notin
  \mathsf{Traces}(m)$. Then let $F(n,w )= \lor_{m \in C(n),
    u^m=\sigma} F(n,m,w)$. Intuitively, $F(n,m,w)$ can be satisfied if
  there exist a state $v$ for $X^{n,w}$ and $v'$ for $X^{m,w'}$ such
  that $[s^a]$ and $[t^a]$ evaluate to $\truev$ at $v$ and $v'$,
  respectively; action $\sigma$ enables the transition from $v $ to
  $v'$; and $F(m,w')$ is satisfied. The disjunction is needed because
  for a node $n$, there can be more than one $\sigma$-successors.
 
\item If $n:s^a: \mathcal{X}^n$ is a player 2's node and not a leaf, there
  exists exactly one child of $n$, say, $m: t^a: \mathcal{X}^m$,
then for each $w\in \mathsf{Traces}(n)$, let $
  F(n,w)=F(m,w)\land [s^a](X^{n,w})\land [t^a](X^{m,w}) \land
  \mathcal{T}(X^{n,w},X^{m,w}).$
\end{itemize}
The \emph{tree formula} is 
\vspace{-1ex}
 \[F(0)= \land_{w\in \mathsf{Traces}(0)} (F(0,w) \wedge
\varphi_{init}(X^{0,w})).\]
\begin{theorem}
  Let $f_2$ be a winning strategy for the environment in the game
  $\mathcal{G}^a$, the strategy $f_2$ is genuine, i.e.,
  $\gamma(f_2)\ne \emptyset$, if and only if the tree formula $F(0)$ is
  satisfiable.
\end{theorem}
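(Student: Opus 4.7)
The plan is to prove both directions by structural induction on the abstract counterexample tree $\mathbb{T}(f_2, s^a_0)$, processed from leaves to root. The tree formula $F(0)$ is designed so that its satisfiability mirrors the existence of a concrete player~2 strategy $f_2^c \in \gamma(f_2)$: the shared observable variables $X_v^n$ force consistency across all traces through a node $n$, while the per-trace hidden variables $X_h^{n,w}$ reflect that player~1, being observation-based, may accept different concrete hidden values for observationally indistinguishable situations. I will first set up an inductive invariant at each node $n$ with incoming trace $w$ of the form: the formula $F(n,w)$ is satisfiable by an assignment extending a given valuation of $X_v^n$ if and only if there is a concrete play from some $v \in \gamma(s)$ with $s \in s^a$ (matching the given observable part) that is consistent with $f_2$ restricted to the subtree rooted at $n$ and driven by the player~1 choices recorded in $w$ and its extensions.

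For the inductive step, the base case of leaves is immediate since $F(n,w)=[s^a](X^{n,\epsilon})$ is satisfiable exactly when $\gamma(s)\ne \emptyset$ for some $s\in s^a$, i.e., the counterexample terminates in a genuinely reachable unsafe configuration. At a player~2 node $n$ with unique $f_2$-successor $m$, the clause $[s^a](X^{n,w}) \land [t^a](X^{m,w}) \land \mathcal{T}(X^{n,w},X^{m,w})$ together with the inductive satisfiability of $F(m,w)$ encodes precisely the existence of a concrete transition $(v,v')\in T_2$ with $v\in \gamma(s)$, $v'\in\gamma(s')$, $s\in s^a$, $s'\in t^a$, which is what determines a concrete player~2 move. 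At a player~1 node, the disjunction over children sharing the same output $\sigma$ reflects that player~1's abstract choice $\sigma$ may be concretely realized via any of several abstract successors; the conjunction over traces $w=\sigma w'$ then ensures that the single concrete state $v$ chosen at $n$ (with fixed observable part $X_v^n$) supports every response player~1 might make.

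From this invariant, the two directions of the theorem fall out. For ($\Leftarrow$), given a satisfying assignment $\mu$ of $F(0)$, define $f_2^c(\rho^c)$ at each history $\rho^c$ whose observation is consistent with some path in the tree by taking the value $\mu(X^{m,w})$ at the unique $f_2$-successor node; standard book-keeping shows $f_2^c$ is observation-consistent with $f_2$ and the resulting plays remain inside the tree, hence reach a leaf in $E^a$, so $f_2^c$ is winning. For ($\Rightarrow$), pick any $f_2^c \in \gamma(f_2)$ and extract the concrete plays against every possible player~1 trace $w$; the states encountered at node $n$ along trace $w$ provide the assignment for $X^{n,w}$. The initial-state constraint $\varphi_{init}(X^{0,w})$ is satisfied since all plays start at $v_0$, and consistency of $X_v^n$ across traces follows because $f_2^c$, being a fixed concrete strategy, produces the same observable history on observationally indistinguishable runs.

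The main obstacle is the careful bookkeeping at player~1 nodes: because player~1 is observation-based, different traces $\sigma_1, \sigma_2, \ldots$ from the same node $n$ must share the observable portion of the concrete state at $n$, but may diverge in the hidden portion. The proof must check that the variable sharing scheme $X_v^n$ versus $X_h^{n,w}$ exactly captures this, and that the disjunction $\lor_{m\in C(n), u^m=\sigma}$ in $F(n,w)$ together with the outer conjunction $\land_{w}$ at the root correctly aligns with the existential choice of a concrete successor and the universal quantification over player~1 responses. Once this alignment is verified, both directions reduce to routine induction on tree depth, invoking the transition-relation semantics of $\mathcal{T}$ and the definitions of $\alpha$ and $\gamma$.
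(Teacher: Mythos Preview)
The paper does not actually give a proof of this theorem: its entire argument is the single line ``The reader is referred to \cite{DimitrovaF08}.'' So there is no in-paper proof to compare against; your proposal is a self-contained reconstruction of the argument that the cited paper presumably contains.

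Your overall architecture---structural induction on the \ac{act}, with the variable-sharing scheme $X_v^n$ versus $X_h^{n,w}$ doing the work of aligning observable consistency with per-trace hidden freedom---is the natural and correct approach, and your identification of the crux (that the disjunction over $\sigma$-successors inside each $F(n,w)$ together with the outer conjunction $\bigwedge_w$ at the root must match the $\exists$-$\forall$ quantifier pattern of player~2 versus observation-based player~1) is exactly right. Two small points are worth tightening. First, your stated inductive invariant is phrased for a single trace $w$, but the property you ultimately need is about the conjunction $\bigwedge_{w\in\mathsf{Traces}(n)} F(n,w)$ under the shared-variable constraint; it is cleaner to carry that conjunction through the induction rather than appeal to the root-level conjunction only at the end. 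Second, in the ($\Leftarrow$) direction you conclude that the extracted $f_2^c$ is \emph{winning}; the theorem as stated only asserts $\gamma(f_2)\neq\emptyset$, i.e.\ concretizability, and the leaf formula $F(n,\epsilon)=[s^a](X^{n,\epsilon})$ does not by itself force the concrete leaf state to lie in $E$ (contrast with $\tilde F(n,\epsilon)$ in Section~\ref{subsec:refinetrans}). This does not break your argument for the theorem as stated, but you should not overclaim.
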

\begin{proof}
The reader is referred to  \cite{DimitrovaF08}.
\end{proof}
\addtocounter{example}{-1}
\begin{example}[cont.]
 Consider, for instance, the trace
  $\sigma_1w'\in \mathsf{Traces}(0)$ corresponds to a labeled path
  $ 0\xrightarrow{\sigma_1} 1 \xrightarrow{\epsilon } 5 \text{ and } w'\in
  \mathsf{Traces}(5)$. Since $0\xrightarrow{\sigma_1}1$, we have 
$F(0, \sigma_1w')= F(1,w') \land [s^a_0](X^{0, \sigma_1w'})\land
[s^a_1](X^{1,w'}) \land 
u^1=\sigma_1\land \mathcal{T}(X^{0, \sigma_1w'}, X^{1,w'})
$ where $X^{1,w'}= (x^1,y^{1,w'},u^1,t^1)$. Then given $1\xrightarrow{\epsilon}5$,
$F(1,w')= F(5,w') \land [s^a_1](X^{1,w'}) \land [s^a_5](X^{5,w'})
 \land
\mathcal{T}(X^{1,w'}, X^{5,w'})
$, where $X^{5,w'}= (x^5,y^{5,w'},u^5,t^5)$. In above equations, for instance
$[s^a_0](X^{0,\sigma w'})= \neg (x^0 \ge 9 \lor x^0 \le -1 \lor
y^{0,\sigma w'} \le -4 \lor y^{0,\sigma w'} \ge 4) \land
 u^0=\sigma_1 \land t^0=1$.
\end{example}

\subsection{Refining the abstract transition relations}
\label{subsec:refinetrans}

Given a node $n$ and a trace $w \in \mathsf{Traces}(n)$, if $F(n,w)$
is unsatisfiable, then the occurrence of the spurious counterexample
is due to the approximation made in abstracting the transition
relation. To rule out this counterexample, we need to refine the
abstract transition relation. For this purpose, we define a \emph{node
  formula} $\tilde{F}(n,w)$ as described below.

First, we define the \emph{pre-condition of a formula}: for a
formula $\varphi$ and $\sigma\in \Sigma$, the pre-condition of
$\varphi$ with respect to $\sigma$, $\textsc{Pre}_1(\sigma,\varphi)$
is a formula such that $v \models \textsc{Pre}_1(\sigma, \varphi)$ if and only if
there exists $v'\in V$ such that $v' \models \varphi$, $v'(u)=\sigma$
and $ (v,v') \in T_1 $. Intuitively, at any state $v$ that satisfies
this formula $\textsc{Pre}_1(\sigma,\varphi)$, the system, after
initiating the output $\sigma$, can reach a state $v'$ at which
$\varphi$ is satisfied.  Let $\textsc{Pre}_1(\varphi)=\lor_{\sigma \in
  \Sigma} \textsc{Pre}_1(\sigma,\varphi)$. Correspondingly,
$\textsc{Pre}_2(\varphi)$ is a formula such that $v \models
\textsc{Pre}_2(\varphi)$ if and only if there exists $v' \in V$, $v'\models \varphi$ and
$(v,v')\in T_2$.

Now, we define the \emph{node formula} $\tilde{F}(n,w)$ as follows.

\begin{itemize}
\item If $n:s^a$ is a leaf node, then $w=\epsilon$
  and $\tilde{F}(n,\epsilon)=\lor_{s\in s^a, [s]\implies \varphi_{err}} [s] $.
\item If $n:s^a$ belongs to player 1 and is not a leaf, and $w =\sigma
  w'$, then
\vspace{-1ex}
\[
\tilde{F}(n,w)= [s^a]\land \textsc{Pre}_1(\sigma, \lor_{\ell \in C(n),
  u^\ell = \sigma}
\tilde{F}(\ell,w')),
\]
where $\tilde{F}(\ell,w')$ is false if $w'\notin \mathsf{Traces}(\ell)$. Here, the set
$\{\ell \in C(n)\mid u^{\ell}=\sigma \}$ is a set of
$\sigma$-successors of $n$. 
\item If $n:s^a$ belongs to player 2's and is not a leaf, then 
\vspace{-0.5ex}
\[
\tilde{F}(n,w)= [s^a]\land \textsc{Pre}_2\left(\tilde{F}(m, w)\right)
\]
where $m\in C(n)$ is the unique child of node $n$.
\end{itemize}

We augment the current set $\mathcal{P}$ with all
predicates that occur in the formula $\tilde F(n,w)$, i.e.,
$
\mathcal{P}':=\mathcal{P}\cup \mathsf{Preds}(\tilde F(n,w)).
$
For each node $n$ and each $w\in \mathsf{Traces}(n)$ such that $
F(n,w)$ is unsatisfiable, the procedure generates a set of predicates
$\mathsf{Preds}(\tilde F(n,w))$, which are then combined with the
current predicate set to generate a new abstract game. We repeat
this procedure iteratively until a set of predicates is found such
that for any $n$ and any $w \in \mathsf{Traces}(n)$, $ F(n,w)$ is satisfiable.

\subsection{Refining the abstract observation equivalence}
\label{subsec:refineobeq}
If each trace formula for the considered counterexample tree is satisfiable, but the tree formula is not, then we need to check
whether the existence of a counterexample is because of the
coarseness in the abstraction observation-equivalence.

We are in the case when for all $w\in
\mathsf{Traces}(0)$,  $F(0,w) \wedge \varphi_{init}(X^{0,w})$ is satisfiable.  Let
$\Phi=\{F(0, w)\wedge \varphi_{init}(X^{0,w})\mid w \in \mathsf{Traces}(0)\}$. Since $F(0) = \land_{\phi\in \Phi} \phi$ is unsatisfiable, there exists a subset $\Psi$ of
$\Phi$ such that $\psi= \land_{\phi\in \Psi} \phi$ is satisfiable and
a formula $\varphi \in \Phi \setminus \Psi$ such that $\varphi\land \psi$ is unsatisfiable. Let the sets of
free variables in $\psi$ and $\varphi$ be $Y$ and $Z$
respectively. Since only observable variables are shared between
different traces, $Y\cap Z$ only consists of \emph{observable}
variables.

A \emph{Craig interpolant} \cite{smullyan1995first} for the pair $(\psi(Y), \varphi(Z) )$ is
a formula $\theta(Y\cap Z)$ such that \begin{inparaenum}
\item $\psi(Y) $ implies $ \theta(Y\cap Z)$,
\item $\varphi(Z) \land\theta (Y \cap Z)$ is unsatisfiable.
\end{inparaenum}
To illustrate, consider the following example. Let $\varphi_1=
(y^5=y^0+1)\land (y^5\ge 4)$ and $ \varphi_2=(y^0\le 1)$. Clearly,
$\varphi_1\land \varphi_2 \equiv \perp$ because $y^0$ in $\varphi_1$ needs
to satisfy $y^0\ge 3$.  Then the formula $\theta = y^0\ge 3$ is an
interpolant for the pair of formulas
$(\varphi_1(y^0,y^5),\varphi_2(y^0))$. For a number of logical
theories commonly used in verification, including linear real
arithmetic, Craig interpolants can be automatically computed
\cite{McMillan2011}.

After computing the interpolant $\theta$ for $(\psi,\varphi)$, we update the set of
predicates to be
$
\mathcal{P}':= \mathcal{P}\cup \mathsf{Preds}(\theta)
$. 
In the end, Algorithm~1 describes
the refinement procedure.
\vspace{-2ex}
\begin{figure}[ht]
\centering
\includegraphics[width=0.5\textwidth]{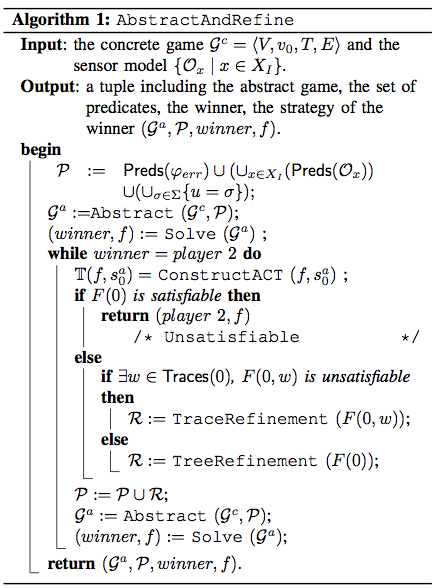}
\label{alg:abstractrefine}
\end{figure}
\vspace{-4ex}

\section{Sensor reconfiguration}
\label{subsec:sensorconf}
Suppose the task specification is unrealizable given the current
sensor model. Then, a prelude to refining the sensor is identifying
whether the source of unrealizability is limited sensing. To this end,
we first check whether it is realizable under the assumption that the
system has perfect observation over its environment. For this purpose,
we run the procedure \texttt{AbstractAndRefine} with the concrete game
$\mathcal{G}^c$ and a sensor model defined as $\{\mathcal{O}_x =\top
\mid x\in X_I\}$, which means all the input variables are globally
observable.  If player 1 wins the abstract game, then we can conclude
that the task is not realizable because of the limited sensing
capability.

The procedure \texttt{SensorReconfigure}, shown as Algorithm~2, computes
a set of predicates that we need to observe in order to
satisfy a given specification. The algorithm takes the concrete
system, its current sensor model and an unrealizable specification as
input. Then by making all variables observable, we use the procedure \texttt{AbstractAndRefine} to determine if the task is realizable given complete observation. If \texttt{AbstractAndRefine} terminates with a positive answer, then, the set of predicates obtained by the refinement suffices for realizing the specification. Further, the predicates involving
unobservable variables indicate the set of new sensing modalities to
be added, and provide the requirements on the sensors'
precision and accuracy for both observable and unobservable variables.

\vspace{-2ex}
\begin{figure}[ht]
\centering
\includegraphics[width=0.5\textwidth]{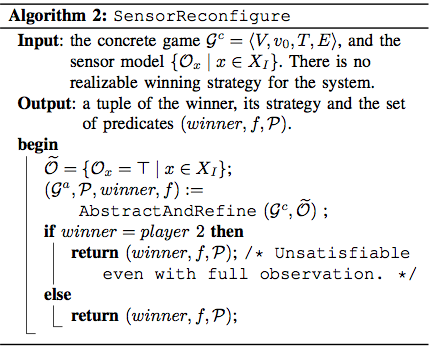}
\label{alg:sensor_reconfig}
\end{figure}
\vspace{-4ex}

\section{Case study}
\label{subsec:case}
We demonstrate the method by revisiting Example \ref{ex}. Assuming the
dynamics of obstacle obj1 with position $(x_p,y_p)$ is given in form
of logical formula $\varphi_{p}: = ((x \le 6 \land x_p' \ge 7) \lor (x
> 6 \land x_p' \ge 6)) \land \neg \varphi_{hit}$ where $\varphi_{hit}$
is a formula that is satisfied when the obstacle hits the wall or the
robot.  For obstacle obj2 $(x_o,y_o)$, we have $\varphi_{o}: = ((x \le
6 \land x_o '< 4) \lor( x > 6 \land x_o'\le 7)) \land \neg
\varphi_{hit}$. Here we have a liveness condition which specifies that
the robot has to visit and then stay within region $R_2$. To enforce
such constraint, we introduce a Boolean variable $err$ and set $err
=1$ if $x \ge 6\land x' < 6 $, which means if the robot in $R_2$
returns to $R_1$, an error occurs and the system reaches an unsafe
state.
\paragraph*{Case 1}
Due to the limited sampling rate in the sensor for variable $y$, the
system receives the exact value of $y$ intermittently (every other
step). In this case, we introduce a predicate $p_s$ such that if  $p_s =1$
then the exact value of $y$ is observed, otherwise there is no data
sampled. The transition relation $\mathcal{T}(X,X')$ is modified to
capture this type of partial observation. For example, given $u
=\sigma_3$, the transition is $ t \land( u=\sigma_3 )\land y' \ge y+1
\land y'\le y+1.5 \land x'=x \land \varphi_p \land ((\neg p_s \land
y_s' = y +1 \land y_n' = y+1.5 ) \lor ( p_s \land y_s' =y' \land y_n'
= y')) \land p_s' =\neg p_s \land t'= \neg t $ where $y_s$ and $y_n$
are auxiliary variables used by the robot to keep track of the upper
and lower bounds, respectively, for the value of $y$. Intuitively,
when there is no data received, the robot makes a move such that for
every $y$ within the upper and lower bounds, for all possible changes
in its obstacles, it will not encounter any unsafe state.  Then the
sensor data received in the next step resolves the ambiguity it
had earlier about $y$ and an action is selected accordingly.

The abstraction refinement procedure starts with an initial set of
$11$ predicates. After $17$ iterations, we obtained an abstract game
in which the system has a winning strategy. The abstract game is
computed from $45$ predicates and has $2516$ states. The computation
takes $5.8$ min in a computer with $4$ GB RAM, Intel Xeon
processors. The obtained predicates relating to the variable $y$ falls
into the following categories:
\begin{inparaenum}[(1)]\item
Predicates over the unobservable variable $y$: $y \le -4$, $ y \ge
4$, $y\le -2.5$, $y\ge 3.5$, $y \le -1$, $y\ge 2.5$, $y< 1.5$, $y \le
1.5$, $y \le 0$, $y\ge 2$, $y \le y_o$, $y < y_s$.
\item
Predicates over the observable variable $y_s$:  $y_s\le
1.5$, $y_s \ge 1.5$. And \item there is no predicate over the upper bound
variable $y_n$.
\end{inparaenum} The predicates relating to the obstacles $(x_p,y_p),
(x_o,y_o)$ are the following: $x_p \le x $, $x \le x_o$, $x_p
\le 2$, $x_p \le 7$, $x_p \le 6$, $x_o \le x $, $x_o < 4$, $x_p \le
6$, $x_o > 4$, $y\le y_o $. 

With the obtained set  $\mathcal{P}$ of predicates, we can decide the
requirement on the precision of sensor for this task. For
every $p\in \mathcal{P}$ , the constants in $p$ has at most one
decimal place, for example, $y\ge 2.5$. Thus, a sensor which can
reliably measure just one decimal place would suffice. Besides, there is no need to keep track of the upper bound $y_n$ for
$y$ and also the value of  $y_p$ for obj2.

\paragraph*{Case 2}In this case we consider the sensor model with an
extra limitation: the robot cannot observe obj2 if it is in $R_1$, or
obj1 when it is in $R_2$. To capture this local sensing modality, we
made $x_o,y_o,x_p,y_p$ unobservable and introduce another four
auxiliary observable variables $x_p^c,y_p^c$ and $x_o^c, y_o^c$. When
the robot is in $R_1$, the values of $x_o^c,y_o^c$ equal that of $x_o$
and $y_o$. But when it is in $R_2$, $(x_o^c,y_o^c)$ can be any point
in $R_1$ following the dynamic in robot's assumption of obj1.  Similar
rules applied to $x_p^c $ and $y^c_p$. For the same task
specification, after 21 iterations, which takes about $30$ min, the
abstraction refinement outputs an abstract system with $8855$ states
using $60$ predicates and finds the robot a winning strategy.

\section{Conclusion and future work}
\label{sec:conclusion}

We took a first step toward explicitly accounting for the effects of
sensing limitations in reactive protocol synthesis. The formalism we
put forward is based on partial-information, turn-based,
temporal-logic games. Using witnesses for unrealizability in such
synthesis problems and interpolation methods, we proposed an
abstraction refinement procedure. An interpretation of this procedure
is systematical identification of new sensing modalities and precision
in existing sensors to be included in order to construct feasible
control policies in reactive synthesis problems.  A potential
bottleneck of the proposed formalism is the rapid increase in the
problem size due to, for example knowledge-based subset
construction. A pragmatic future direction is to consider so-called
lazy abstraction methods \cite{henzinger2002lazy} for partial
observation control synthesis, so that different parts of the concrete
game can be abstracted using different sets of predicates. In this
manner, the system is abstracted with different degree of precision
and thus its sensor model can also be configured ``locally" for
different parts of the system.  Furthermore, besides precision, one
would also be interested in refinements in sensing with respect to
accuracy; therefore, extensions to partially observable stochastic
two-player games are also of interest.

\bibliography{mybib}
\bibliographystyle{IEEEtran}
\end{document}